\documentclass[10pt,a4paper]{article}

\usepackage{amsmath,amssymb,amsfonts,amsthm}
\usepackage{textcomp}
\usepackage{t1enc}
\usepackage[english]{babel}
\usepackage[latin1]{inputenc}
\usepackage{color}
\usepackage{verbatim}
\usepackage{hyperref}
\newtheorem{theorem}{Theorem}[section]
\newtheorem{lemma}[theorem]{Lemma}

\theoremstyle{definition}

\usepackage{enumerate}

\definecolor{bronze}{rgb}{0.8, 0.5, 0.2}

\renewcommand{\t}[1]{\tilde{#1}}

\newcommand{\ra}{\rightarrow}

\newcommand{\be}{\begin{equation}}
\newcommand{\me}{\mathbf{E}}
\newcommand{\ee}{\end{equation}}
\newcommand{\mb}{\mathbf{B}}

\newcommand{\Rt}{\mathbb{R}^3}

\title{Extremal black hole initial data deformations}
\author{Andr\'es Ace\~na$^{1}$ and Mar\'\i a  E. Gabach Cl\'ement$^{2}$\\ 
\\  
  $^{1}$Facultad de Ciencias Exactas y Naturales,\\
  Universidad Nacional de Cuyo, \\ CONICET, Mendoza, Argentina. \\
  $^2$Facultad de Matem\'atica, Astronom\'{i}a y F\'{i}sica,\\
  Universidad Nacional de C\'ordoba,\\
  Instituto de F\'{\i}sica Enrique Gaviola, CONICET,\\
  C\'ordoba, Argentina.  \\
  }
\date{}

\begin{document}
\maketitle
\begin{abstract}
We study deformations of axially symmetric initial data for Einstein-Maxwell equations 
satisfying the time-rotation ($t$-$\phi$) symmetry and containing one asymptotically 
 cylindrical end and one asymptotically flat end. We find that the $t$-$\phi$ symmetry implies the 
 existence of a family of deformed data having the same horizon structure. This result allows us to
 measure how close solutions to Lichnerowicz equation are when arising from nearby free data.
 \end{abstract}

\section{Introduction}

The observation  that 
wormhole initial data (black hole data having two asymptotically flat ends) rapidly evolve to trumpet initial data (one 
asymptotically flat and one cylindrical 
end) (see \cite{Hannam:2006vv}, \cite{Hannam:2008sg} and references therein) motivated the use of trumpet data to study numerical binary collisions since, in this way, the gauge
evolution and the initial noise  in wave quantities would be minimized. This then inspired an extensive
study of
initial data for Einstein equations having cylindrical ends, both from the numerical relativity 
community \cite{Baumgarte:2008jc}, \cite{Hannam:2009ib}, \cite{Immerman:2009ns}, \cite{Hinder:2010vn}, \cite{Dietrich:2013rua}, \cite{Dennison:2014eta} and the mathematical 
relativity side \cite{Dain:2008yu}, \cite{gabach09}, \cite{Waxenegger:2011ci}, 
\cite{Schoen:2012nh}, \cite{Chrusciel:2012np}, \cite{Chrusciel:2012nv}. There seems to be a close relation between the presence of a cylindrical end and 
certain extremality condition  suggested in part, by the behavior of stationary solutions 
like Kerr-Newman and also by the fact \cite{gabach09} that given a mono-parametric family of conformally flat initial data
having a wormhole structure, with given angular momentum and charges, then there exists a singular 
limit as the parameter goes to zero, where the asymptotic structure changes to  trumpet-like and the angular momentum and 
charges are maximal for given mass. 
This reinforces the interest in studying initial data with cylindrical ends
in  an attempt to understand cosmic censorship issues, black hole formation, conical singularities appearing
in stationary multi-black hole solutions, etc.

Initial data with more than one cylindrical end, \textit{i.e.} representing many \textit{extremal} 
black holes, are specially important. They include data for the Majumdar-Papapetrou solution 
\cite{Weinstein:1994bn}, consisting of black holes of the extremal Reissner-Nordstr\"om type. It is 
the only static multi black hole solution of Einstein-Maxwell equations in equilibrium known to us. Moreover
one would expect it to be the unique electro-vacuum, stationary solution with disconnected horizon. Nevertheless, 
the proof of this result and a complete analysis of its stability are lacking.
Motivated by these open problems it is our aim here to understand perturbations of electromagnetic fields in initial data for Einstein-Maxwell
with cylindrical ends. To us, this is a first step in the study of deformations of the full 
4-dimensional Majumdar-Papapetrou solution.

In the past five years there has been increasing interest in developing the mathematical tools appropriate
to deal with the problem of initial data with cylindrical ends. In \cite{Dain:2008yu} weighted Sobolev spaces were
used to prove existence of an  extremal solution with one cylindrical end as a special limit of Bowen-York 
initial data. This was generalized in \cite{gabach09} to conformally flat initial data. Then, Waxenegger
\textit{et al} \cite{Waxenegger:2011ci} adapted the theorem of sub and
super solution on weighted H\"older spaces to prove the same result without invoking the singular extremal limit.
On the other hand, in \cite{Dain:2010uh}, deformations of extreme Kerr black holes were studied. It was proven that for compactly supported perturbations, there
exists a unique family of nearby initial data, that have the same horizon structure as extreme Kerr  
but greater ADM mass. For that result a specific property of 
extreme Kerr's metric was explicitly used in the proof.
In more general terms Chrusciel \textit{et al}  \cite{Chrusciel:2012np} have studied solutions to Lichnerowicz equation. They proved 
existence of vacuum 
initial data with positive scalar curvature, having a number of asymptotically flat and cylindrical ends.
This is an important existence result that extends previous ones by  Choquet-Bruhat \textit{et al} 
\cite{Choquet99} to manifolds with cylindrical (or periodic or hyperbolic) ends. Uniqueness of solution however,
has not been dealt with in \cite{Chrusciel:2012np} due mainly to the methods 
used there.

In this article we are interested in electro-vacuum initial data with an asymptotically flat end and one 
cylindrical end, representing the black hole horizon. We address the problem of how \textit{close} solutions to Lichnerowicz equation 
are, when they arise from \textit{close} free data.  
The idea is thus to consider two sets of free data for Lichnerowicz equation that are close in a 
certain norm, and analyze how close the corresponding initial data found from them are. We
choose the free data sets as one being a deformation of the other one. This extends the result of 
\cite{Dain:2010uh} to more general, axially symmetric initial data for Einstein-Maxwell
equations having the $t$-$\phi$ symmetry. 
Basically this extra symmetry gives a positivity condition (Yamabe positivity) that replaces the 
explicit property of Kerr used in \cite{Dain:2010uh}. Moreover, we also abandon 
the vacuum hypothesis, in view of our later study of the Majumdar-Papapetrou solution.

The manuscript is organized as follows: In section \ref{secmain} we present Einstein constraints and  describe 
in detail the hypotheses we work with, axial symmetry and time-rotation symmetry. We show how they lead to the Lichnerowicz equation and the
asymptotic boundary conditions. We present our main result, Theorem  \ref{thm} and discuss its scope and 
implications afterwards. In section \ref{secproof} we present the proof of Theorem \ref{thm}.

\section{Main result}\label{secmain}

Consider a 3-dimensional surface $M=\mathbb R^3 \setminus \{0\}$. An initial data for the Einstein-Maxwell equations is 
a set $(M,g_{ij},K_{ij}, E^i,B^i)$ where $g_{ij}$ is the 3-metric on $M$, $K_{ij}$ is the 
extrinsic curvature tensor and $E^i$, $B^i$ are the electromagnetic fields on $M$. 
This set of fields satisfies the constraints on $M$:
\be\label{ham}
R+K^2-K_{ij}K^{ij}=2(E_iE^i+B_iB^i)
\ee
\be\label{mom}
D_jK_i^j-D_iK=-2\epsilon_{ijk}E^jB^k
\ee
\be\label{maxcon}
D_iE^i=0,\qquad D_iB^i=0
\ee
where $K=K_{ij}g^{ij}$, $D_i$, $R$ and $\epsilon_{ijk}$ are respectively the covariant 
derivative, the curvature scalar and the volume form associated to the metric $g_{ij}$. 
For simplicity, we will not consider
electromagnetic currents in \eqref{mom}, \textit{i.e.} $j_i=-2\epsilon_{ijk}E^jB^k=0$. This is  
a technical assumption to make the equations and calculations easier, but could be removed 
without altering the basic results of this article.

We will focus on initial data satisfying the above equations and the following 
three hypotheses:

\vspace{0.5cm}

\textit{H1. Axial symmetry}. We consider axially symmetric initial data, namely, we assume that there exists a Killing vector field $\eta$ tangential to $M$ with complete closed 
orbits, such that $\mathcal L_\eta g_{ij}=0$, $\mathcal L_\eta K_{ij}=0$, $\mathcal L_\eta E^{i}=0$, $\mathcal L_\eta B^{i}=0$.  In cylindrical coordinates 
$(\rho,z,\phi)$ we write $\eta^i=(\partial_\phi)^i$ and axial symmetry implies in particular that the fields above will not depend on $\phi$. Moreover, we will see below that this 
assumption allows us to write the metric, extrinsic curvature and electromagnetic fields in a simple manner in terms of scalar potentials.

\vspace{0.5cm}

\textit{H2. Time-rotation symmetry.} Besides axial symmetry we impose a discrete symmetry, namely, time-rotation symmetry. In terms of the 
initial data and the coordinates associated with the axial symmetry, this means that under the map 
$\phi\to-\phi$ the initial data map as (see 
Appendix \ref{Aptimerotation})
\be\label{tphi1}
g_{ij}\to g_{ij}, \qquad K_{ij}\to-K_{ij} 
\ee
and 
\be\label{tphi2}
E^i\to E^i,\qquad B^i\to B^i.
\ee
Initial data satisfying this symmetry turn out to be maximal and has been called ``momentarily stationary'', as this symmetry is to a stationary 
space-time what time-symmetry is to a static space-time \cite{Bardeen70}, \cite{Hawking73b}, 
\cite{Brandt:1996si}. In the treatment below it 
will be highlighted why we need this symmetry in order for our equations to be written in a particular form, without being too restrictive as 
to forbid the consideration of dynamical space-times.

It is important to remark that the time-rotation symmetry implies (see 
\cite{Brandt:1996si}) maximality $K=0$ and moreover, due to the Hamiltonian constraint \eqref{ham}, also 
$R\geq0$. This in turn means that $(M,g_{ji})$ satisfies the positivity condition
\be\label{yamabe}
\int_M|\partial f|_{g}^2+ Rf^2 d\mu_g>0
\ee
for all $f\in C^\infty_0$, where $\partial$
denotes partial derivatives and the norm, curvature scalar and volume element 
$d\mu_g$ are taken with respect to $g$. For later purposes, we will say that $(M,g_{ji})$ 
satisfying \eqref{yamabe} is in the positive Yamabe class $\mathcal Y^+$.

\vspace{0.5cm}

\textit{H3. Asymptotic structure}. The manifold $M=\mathbb R^3\setminus\{0\}$ has an asymptotically flat end and we take the origin to be a cylindrical end. This means \cite{Chrusciel:2012np} 
that the cylindrical end is identified with the 
product $\mathbb R^+\times N$, where $N$ is compact and the asymptotic metric is conformal (with bounded conformal factor) to 
\be\label{metriccyl}
\hat g=dx^2+h+\mathcal O(e^{-a x})
\ee
for a metric $h$ on $N$ and some positive constant $a$. Moreover we will restrict our study to the topologically spherical case $N=S^2$, and take $h$ to be
the standard metric on the unit sphere. 

\vspace{0.5cm}
We will approach the constraint equations by using the Conformal Method, \cite{Choquet99}. Consider the rescaling 
\be\label{rescaling}
g_{ij}=\Phi^4\tilde g_{ij},\quad K_{ij}=\Phi^{-2}\tilde K_{ij},\quad,E^i=\Phi^{-6}\tilde E^i,\quad B^i=\Phi^{-6}\tilde B^i
\ee
where $\Phi>0$ and for simplicity, we take the conformal metric to be
\be\label{metric}
\tilde g_{ij}=e^{2q}(d\rho^2+dz^2)+\rho^2 d\phi^2
\ee
where $q$ does not depend on $\phi$. This rescaled conformal metric $\tilde g_{ij}$ is not the most general axially 
symmetric metric satisfying \eqref{tphi1} (see eq. (5) in \cite{Acena:2010ws}). Nevertheless, it is not difficult to see that
the same procedure can be made  for that
more general metric.

Under this rescaling, the constraints read
\be\label{hammax}
\tilde D_i\tilde D^i\Phi=\frac{1}{8}\tilde R\Phi-\frac{\tilde K_{ij}\tilde K^{ij}}{8\Phi^7}-\frac{\tilde E_i\tilde E^i+\tilde B_i\tilde B^i}{4\Phi^3}
\ee
\be\label{mommax}
\tilde D_j\tilde K_i^j=0,\quad \tilde D_i\tilde E^i=0,\quad \tilde D_i\tilde B^i=0.
\ee
Here $\tilde R$ is the curvature scalar associated to $\tilde g_{ij}$
 
In electro-vacuum and axial symmetry, the fact that $M$ is simply connected implies \cite{Chrusciel:2009ki} the 
existence of potentials $\omega$, $\psi$ and $\chi$ given by 
\be\label{defomega}
\tilde K^{ij}=\frac{2}{\rho^2}\tilde S^{(i}\eta^{j)},
\qquad \tilde S^i=\frac{1}{2\rho^2}\tilde \epsilon^{ijk}\eta_j\partial_k\omega, 
\ee
\be\label{defF}
\partial_i\chi=F_{ji}\eta^j,\qquad \partial_i\psi= *F_{ji}\eta^j,
\ee
such that the momentum and Maxwell constraints \eqref{mommax} are automatically satisfied (see \cite{Dain06c} for a proof in the momentum 
case and the appendix \ref{ApEinstein} 
for the electromagnetic case). Here $F_{ij}$ is the 4-dimensional electromagnetic tensor, which can be constructed in the standard way from $E^i$ and $B^i$ \eqref{FfromEB}.

The values of the potentials $\omega$, $\psi$ and $\chi$  are constant on each connected component of the symmetry axis $\Gamma:=\{\rho=0\}$ and 
give the angular momentum $J$, 
electric charge $Q_E$ and magnetic charge $Q_B$ respectively \cite{Chrusciel:2009ki}:
\be\label{bndcond}
J=\frac{\omega_--\omega_+}{8},\qquad Q_E=\frac{\psi_--\psi_+}{2},\qquad Q_B=\frac{\chi_--\chi_+}{2}, 
\ee
where we denote by $\omega_+:=\omega(\rho=0, z>0)$, $\omega_-:=\omega(\rho=0, z<0)$ the values of the function $\omega$ on the axis, 
at positive and negative values of $z$ respectively. Analogous notation holds for the electromagnetic potentials.

In terms of these potentials, the symmetry conditions \eqref{tphi1}-\eqref{tphi2} translate into the 
following expressions appearing in the Hamiltonian constraint
\be\label{fieldspot}
\tilde K_{ij}\tilde K^{ij}=e^{-2q}\frac{(\partial\omega)^2}{2\rho^4},\qquad 
\tilde E_i\tilde E^i=e^{-2q}\frac{(\partial\psi)^2}{\rho^2},\qquad \tilde B_i\tilde B^i=e^{-2q}\frac{(\partial\chi)^2}{\rho^2},
\ee
where the norms are taken with respect to the flat metric on $\mathbb{R}^3$.
In general, for data not satisfying the symmetry conditions,  a $\geq$ sign holds in the three equations in \eqref{fieldspot} (see Appendix \ref{Aptimerotation}).

The scalar curvature in terms of the metric function $q$ is given by 
\be\label{R}
\tilde R=-2e^{-2q}\Delta_2q
\ee
with 
$\Delta_2=\partial^2_\rho+\partial^2_z$.

With these variables, the only non-trivial equation left is the Hamiltonian 
constraint \eqref{ham}, which takes the form
\be\label{const}
\Delta\Phi=-\frac{\Delta_2q}{4}\Phi-\frac{(\partial\omega)^2}{16\rho^4\Phi^7}-\frac{(\partial\psi)^2+(\partial\chi)^2}{4\rho^2\Phi^3},
\ee
where $\Delta=\partial^2_\rho+\rho^{-1}\partial_\rho+\partial^2_z$. 

This equation, known as the Lichnerowicz equation, is a non-linear equation for the conformal factor $\Phi$. The set of functions
$\mathcal F:=(q,\omega, \psi,\chi)$ is 
known as \textit{free data}, and can be freely prescribed, made to satisfy the asymptotic conditions appropriate for the problem at hand. Once 
$\Phi$ is found, we can construct the initial data as follows. From the prescribed function $q$ we have the conformal metric $\tilde{g}_{ij}$ \eqref{metric}. With the obtained conformal factor we calculate the metric $g_{ij}$, and using the prescribed functions $\psi$ and $\chi$ we can calculate $E^i$ \eqref{Eipot} and $B^i$ \eqref{Bipot}. From $\omega$ we calculate $\t{K}_{ij}$ \eqref{defomega} and rescaling obtain $K_{ij}$. Therefore we finally have $g_{ij}$, $K_{ij}$, $E^i$ and $B^i$.

Next we investigate the conditions that the functions 
$\Phi$ and $q$ in the metric \eqref{metric} must satisfy at the cylindrical end. We write $g$ in 
spherical coordinates $(r,\theta,\phi)$ and make the change $x=-\ln r$,
\be
g=r^2\Phi^4[e^{2q}dx^2+e^{2q}d\theta^2+\sin^2\theta d\phi].
\ee
Thus, by comparison with \eqref{metriccyl} we obtain that the conditions for the data on the cylindrical end $r\to0$ are
\be\label{phicil}
\Phi=\mathcal O(r^{-1/2}),\qquad q=\mathcal O(1).
\ee

In virtue of equation \eqref{const} and the regularity near the symmetry axis (see 
\cite{Rinne:thesis}) we obtain conditions for the derivatives of the function $q$ and the potentials on the cylindrical 
end
\be\label{derivq}
\Delta_2q=\mathcal O(r^{-2})
\ee
\be\label{derivpot}
|\partial\omega|^2=\mathcal O (r^{-2}\sin^6\theta ), \quad |\partial\psi|^2=\mathcal O (r^{-2}\sin^2\theta ), \quad |\partial\chi|^2=\mathcal O (r^{-2}\sin^2\theta ). 
\ee

Finally, recall that the Yamabe condition \eqref{yamabe} is conformally 
invariant, and therefore if $\tilde g_{ij}$ is conformally related to $g_{ij}$, 
then $(M,\tilde g_{ij})$ also belongs to $\mathcal Y^+$, that is
\be\label{Yamcond}
\int_M|\partial f|_{\tilde g}^2+ \tilde Rf^2 d\mu_{\tilde g}>0.
\ee

\vspace{0.5cm}

The question we want to address is the following. Consider two sets of free data $\mathcal F_0$ and 
$\mathcal F$ giving rise, through \eqref{const}, to corresponding conformal 
factors $\Phi_0, \Phi$ and thus, to initial data satisfying the hypothesis \textit{H1-H3} above. If the free 
data are close in certain norm, how close are the data constructed from them? Clearly, this 
depends mainly on the relative size of the conformal factors. To study this problem 
we will think of $\mathcal F$ as a deformation of the set $\mathcal F_0$, then look for a solution to 
Lichnerowicz equation close to $\Phi_0$ and finally, estimate its relative size. 

Assume $(\Phi_0,q_0,\omega_0,\psi_0,\chi_0)$ satisfy \eqref{const}. Let $|\lambda|$ be a sufficiently small number, take
\be\label{deformation}
q_0\to q_0+\lambda q,\quad \omega_0\to\omega_0+\lambda\omega,\quad\psi_0\to\psi_0+\lambda\psi,\quad\chi_0\to\chi_0+\lambda\chi
\ee
for appropriate axially symmetric functions $q,\omega, \psi, \chi$ and write 
\be
\Phi_0\to \Phi:=\Phi_0+u.
\ee
We demand the perturbed function $\Phi=\Phi_0+u$ to satisfy Lichnerowicz equation \eqref{const} and write the resulting equation for $u$ as
\be\label{eqG}
G(\lambda,u)=0
\ee
with
\begin{eqnarray}\nonumber
G(\lambda,u)=\Delta u+\frac{\Delta_2q_0 u}{4}+\frac{\lambda}{4}\Delta_2q (\Phi_0+u)+\frac{(\partial\omega_0+\lambda\partial\omega)^2}{16\rho^4(\Phi_0+u)^7}-\frac{(\partial\omega_0)^2}{16\rho^4\Phi_0^7}+\\ \label{G}+
\frac{(\partial\psi_0+\lambda\partial\psi)^2}{4\rho^2(\Phi_0+u)^3}-\frac{(\partial\psi_0)^2}{4\rho^2\Phi_0^3}+
\frac{(\partial\chi_0+\lambda\partial\chi)^2}{4\rho^2(\Phi_0+u)^3}-\frac{(\partial\chi_0)^2}{4\rho^2\Phi_0^3}.
\end{eqnarray}
Clearly, if $\lambda=0$, we recover the Lichnerowicz equation for the background solution 
$\Phi_0$ in the form 
\be
G(0,0)=0.
\ee

Our main result, presented in the next theorem proves that there exists a unique solution $u$ of \eqref{eqG} close to the background $(0,0)$ for each small enough $\lambda$.

The weighted Lebesgue spaces $L'^2_\delta$ \cite{Bartnik86}, with weight $\delta\in \mathbb R$ are the spaces 
of measurable functions in $L^2_{loc}(\mathbb R^3\setminus\{0\})$ such that the norms
\begin{equation}
 \|u\|_{L'^2_{\delta}}=\left[\int_{\mathbb R^3\setminus\{0\}}|u|^2r^{-2\delta -3}\right]^{1/2}
\end{equation}
are finite. As usual the weighted Sobolev spaces $H'^k_\delta$ are defined with norms
\begin{equation}
 \|u\|_{H'^k_\delta}=\sum_{j=0}^k\|D^ju\|_{L'^2_{\delta-j}}.
\end{equation}

\begin{theorem}\label{thm}
Let $q, \omega, \psi, \chi\in C^\infty_0(\mathbb R^3\setminus\Gamma)$ be arbitrary smooth axially symmetric functions. Then,
there is $\lambda_0>0$ such that for all $\lambda\in (-\lambda_0, \lambda_0)$  there
exists a solution $u(\lambda)\in H^{'2}_{-1/2}$ of equation
\eqref{eqG}. The solution $u(\lambda)$ is continuously differentiable in
$\lambda$ and satisfies $\Phi_0+u(\lambda)>0$. Moreover, for small $\lambda$
and small $u$ (in the norm $ H^{'2}_{-1/2}$) the solution $u(\lambda)$ is the
unique solution of equation \eqref{eqG}.   
\end{theorem}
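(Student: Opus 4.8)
The plan is to solve \eqref{eqG} by the implicit function theorem, viewing $G$ as a $C^1$ (indeed $C^\infty$) map $G:(-\lambda_0,\lambda_0)\times B\to L'^2_{-5/2}$, where $B$ is a small ball about $0$ in $H'^2_{-1/2}$, with $G(0,0)=0$. The first task is to check that $G$ is well defined and smooth on such a neighbourhood. The flat operator $\Delta$ maps $H'^2_{-1/2}$ continuously into $L'^2_{-5/2}$; the background coefficients $\tfrac14\Delta_2 q_0$, $(\partial\omega_0)^2\rho^{-4}\Phi_0^{-8}$ and $[(\partial\psi_0)^2+(\partial\chi_0)^2]\rho^{-2}\Phi_0^{-4}$ are $\mathcal O(r^{-2})$ at the cylindrical end by \eqref{phicil}--\eqref{derivpot}, vanish like a positive power of $\rho$ on $\Gamma$, and decay at the asymptotically flat end, so multiplication by them also sends $H'^2_{-1/2}$ into $L'^2_{-5/2}$; the terms carrying $\lambda$ are built from $q,\omega,\psi,\chi\in C^\infty_0(\mathbb R^3\setminus\Gamma)$, hence are compactly supported away from the axis and lie in every weighted space; and $(\Phi_0+u)^{-7},(\Phi_0+u)^{-3}$ are smooth in $u$ as long as $\Phi_0+u$ stays above a fixed positive multiple of $\Phi_0$, which holds on $B$ because the weighted Sobolev inequality gives $|u|\le C\|u\|_{H'^2_{-1/2}}r^{-1/2}$ while the background obeys a lower bound $\Phi_0\gtrsim\max(r^{-1/2},1)$ (see \cite{Waxenegger:2011ci, Chrusciel:2012np}). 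Everything then reduces to showing that $L:=D_uG(0,0)$ is an isomorphism $H'^2_{-1/2}\to L'^2_{-5/2}$.

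Differentiating \eqref{G} at $(0,0)$ gives $Lv=\Delta v+\tfrac14\Delta_2 q_0\,v-\bigl(\tfrac{7(\partial\omega_0)^2}{16\rho^4\Phi_0^8}+\tfrac{3(\partial\psi_0)^2+3(\partial\chi_0)^2}{4\rho^2\Phi_0^4}\bigr)v$, and eliminating $\Delta_2 q_0$ by \eqref{const} this becomes $Lv=\Delta v-\tfrac{\Delta\Phi_0}{\Phi_0}v-\alpha v$ with $\alpha:=\tfrac{(\partial\omega_0)^2}{2\rho^4\Phi_0^8}+\tfrac{(\partial\psi_0)^2+(\partial\chi_0)^2}{\rho^2\Phi_0^4}\ge0$; thus $L=\Delta+V$ has a scalar potential and is formally self-adjoint. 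Using $\Delta v-\tfrac{\Delta\Phi_0}{\Phi_0}v=\Phi_0^{-1}\operatorname{div}\bigl(\Phi_0^2\nabla(v/\Phi_0)\bigr)$, the identity $\tilde\Delta_{\tilde g}f=e^{-2q_0}\Delta f$ for the metric \eqref{metric}, and \eqref{R}, one gets
\be\label{planLY}
-e^{-2q_0}Lv=\bigl(-\tilde\Delta_{\tilde g}+\tfrac18\tilde R\bigr)v+e^{-2q_0}\alpha\,v ,
\ee
so $-e^{-2q_0}L$ equals the conformal Laplacian of $\tilde g$ plus a non-negative potential. Now $\delta=-1/2$ is the distinguished, $L^2$-self-dual weight for this problem (it lies in the admissible range $(-1,0)$ for $\Delta$ on $\mathbb R^3$ and corresponds to the borderline decay of $\Phi_0$), so by the weighted elliptic theory on manifolds with one asymptotically flat and one asymptotically cylindrical end (\cite{Bartnik86} and the cylindrical-end analysis used in \cite{Chrusciel:2012np}) $L$ is Fredholm with $\operatorname{coker}L\cong\ker L$ by self-adjointness; it therefore suffices to show $\ker L=\{0\}$. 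If $Lv=0$ with $v\in H'^2_{-1/2}$, pair \eqref{planLY} with $v$ in $L^2(d\mu_{\tilde g})$ and integrate by parts: the boundary contributions at both ends vanish from the asymptotics of $v$ and $\Phi_0$, the Dirichlet integral is finite because $\int_M|\partial v|^2_{\tilde g}\,d\mu_{\tilde g}=\|\partial v\|^2_{L'^2_{-3/2}}$, and one finds $\int_M\bigl(|\partial v|^2_{\tilde g}+\tfrac18\tilde R v^2\bigr)d\mu_{\tilde g}+\int_M e^{-2q_0}\alpha v^2\,d\mu_{\tilde g}=0$. The second integral is $\ge0$, while the Yamabe positivity \eqref{Yamcond} for $\tilde g$ gives $\int(|\partial f|^2_{\tilde g}+\tilde R f^2)>0$ on $C^\infty_0$ and hence, via $\int(|\partial f|^2_{\tilde g}+\tfrac18\tilde R f^2)=\tfrac18\int(|\partial f|^2_{\tilde g}+\tilde R f^2)+\tfrac78\int|\partial f|^2_{\tilde g}$ and density, positivity of the conformal Laplacian on the completion containing $v$; so both integrals vanish, forcing $e^{-2q_0}\alpha v^2\equiv0$, and since $\alpha>0$ on the open set where $\partial\omega_0$, $\partial\psi_0$ or $\partial\chi_0$ is nonzero and $(-\tilde\Delta_{\tilde g}+\tfrac18\tilde R)v=0$ off it, unique continuation yields $v\equiv0$. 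Hence $L$ is injective, so an isomorphism. This is the step that uses the $t$-$\phi$ symmetry — through \eqref{Yamcond} — in place of the explicit Kerr property invoked in \cite{Dain:2010uh}.

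With $L$ invertible the implicit function theorem produces $\lambda_0>0$ and a $C^1$ map $\lambda\mapsto u(\lambda)\in H'^2_{-1/2}$ with $u(0)=0$ and $G(\lambda,u(\lambda))=0$, unique in a neighbourhood of $(0,0)$ in $\mathbb R\times H'^2_{-1/2}$ — precisely the asserted local uniqueness for small $\lambda$ and small $u$. Shrinking $\lambda_0$ so that $\|u(\lambda)\|_{H'^2_{-1/2}}$ is small, the bound $|u(\lambda)|\le C\|u(\lambda)\|_{H'^2_{-1/2}}r^{-1/2}$ together with $\Phi_0\gtrsim\max(r^{-1/2},1)$ gives $\Phi_0+u(\lambda)>0$. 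I expect the isomorphism of $L$ to be the heart of the proof; within it the genuinely delicate points are the admissibility (non-exceptionality) of the borderline weight $\delta=-1/2$ at the cylindrical end and the vanishing there of the boundary terms in the integration by parts, with the non-negativity of $\alpha$ and the sharp asymptotics \eqref{phicil}--\eqref{derivpot} being exactly what keeps them under control.
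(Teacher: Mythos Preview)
Your overall framework matches the paper's: implicit function theorem applied to $G:\mathbb R\times V\to L'^2_{-5/2}$, with the core step being that $L=D_2G(0,0)$ is an isomorphism $H'^2_{-1/2}\to L'^2_{-5/2}$. The well-definedness, $C^1$ regularity, and positivity of $\Phi_0+u$ are handled by the same ingredients (compact support of $q,\omega,\psi,\chi$, the asymptotics \eqref{phicil}--\eqref{derivpot}, and the pointwise bound $\sqrt r\,|u|\le C\|u\|_{H'^2_{-1/2}}$); the paper simply writes each estimate out in detail.

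Where you diverge is the isomorphism proof. The paper does \emph{not} invoke Fredholm theory or indicial weights at all. It shows directly (Theorem~\ref{teoiso} via Lemma~\ref{lemadebil}) that the bilinear form $B[u,v]=\int(\partial u\cdot\partial v+\alpha\,uv)\,d\mu$ is an inner product on $H'^1_{-1/2}$ --- positive-definiteness coming from exactly the Yamabe inequality you use, rewritten in flat form as $\int(|\partial f|^2+\alpha f^2)\,d\mu>0$ --- and then applies the Riesz representation theorem to get a unique weak solution, followed by the regularity lemma of \cite{gabach09} to land in $H'^2_{-1/2}$. This completely sidesteps the question of whether $\delta=-1/2$ is a non-exceptional weight for $\Delta-\alpha$ at the cylindrical end, which you yourself flag as ``genuinely delicate'' but do not verify; your route would require computing the indicial roots on the cylinder for a potential $\alpha\sim h(\theta)r^{-2}$ that depends on the specific background data, and checking that $-1/2$ avoids them. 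The paper's direct variational approach buys a self-contained argument with no such hypothesis to check. Two minor points on your version: in \eqref{planLY} there is a bookkeeping slip --- after eliminating $\Delta\Phi_0/\Phi_0$ via \eqref{const} the residual potential is $\tfrac{7(\partial\omega_0)^2}{16\rho^4\Phi_0^8}+\tfrac{3[(\partial\psi_0)^2+(\partial\chi_0)^2]}{4\rho^2\Phi_0^4}$, not your $\alpha$, though it is still non-negative so nothing breaks; and the unique-continuation step is unnecessary, since once $\int(|\partial v|^2_{\tilde g}+\tfrac18\tilde R\,v^2)\,d\mu_{\tilde g}=0$ the Yamabe positivity already forces $v\equiv0$.
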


Once $u$ is found, we can re-construct the perturbed initial data in the same way as it was explained above.

With this construction, a different wording of Theorem \ref{thm} can be presented as follows:  given 
initial data
$\mathcal D^0:=(M,g^0_{ij}, K^0_{ij}, E^{0i}, B^{0i})$ that satisfy hypotheses \textit{H1-H3}, 
with angular 
momentum $J$ and electromagnetic charges $Q_E$ and $Q_B$, there exists a mono-parametric family of initial 
data sets $(M, g_{ij}(\lambda),  K_{ij}(\lambda), E^{i}(\lambda),B^{i}(\lambda) )$, unique for each 
$\lambda$ sufficiently close to zero such that 
\begin{enumerate}[(i)]
\item $g_{ij}(0)= g^0_{ij}$,  $K_{ij}(0)=K^0_{ij}$, $E^i(0)=E^{0i}$, $B^i(0)=B^{0i}$. The family is
  differentiable in $\lambda$ and it is close to $\mathcal D^0$ with respect to
  an appropriate norm which involves two derivatives.

\item The data has the same asymptotic geometry as $\mathcal D^0$. The angular momentum, charges  and the area of the cylindrical end in the
  family do not depend on $\lambda$, they have the same value as in $\mathcal D^0$.

\item The  data are axially symmetric and time-rotational symmetric. 

\end{enumerate}
The $\lambda$-dependent initial data is to be constructed from the given functions \eqref{deformation} and the solution $u(\lambda)$ to equation \eqref{eqG}. That is, one must solve equations \eqref{defomega}-\eqref{defF}  for $\tilde K_{ij} (\lambda)$ and $F_{ij}(\lambda)$ and use \eqref{metric} to obtain the metric in terms of $\lambda$.

Before going to the proof of Theorem \ref{thm}, we want to make some remarks.

\begin{itemize}

\item Several known black hole solutions fit into the hypotheses described above. In particular the extreme Bowen-York initial data built in 
\cite{Dain:2008yu}, the $\{t=0\}$ slice in extreme 
Kerr and extreme Reissner-Nordstr\"om black holes. As we explained in the introduction, the results of \cite{Dain:2010uh} are of course included in 
Theorem \ref{thm} for the vacuum, extreme Kerr case. Moreover, a $\{t=0\}$ slice of the axially symmetric 
Majumdar-Papapetrou solution also satisfies the hypotheses, but contains many cylindrical ends (as many as black holes are described). This case is of particular
interest for us and will be dealt with in a subsequent paper. The main difficulty that the many ends 
bring into the problem is the appropriate choice and treatment of the Sobolev spaces involved. 

\item The method of proof we use not only gives us existence but also \textit{uniqueness} of solution for 
each $\lambda$. Moreover we also obtain an estimate on the 
perturbed conformal factor in terms of the background one $0\leq\sqrt{r}\Phi\leq \max(\sqrt{r}\Phi_0)+C_1\sqrt{C_2}/2$  where $C_1,C_2$ are 
constants depending on $\Phi_0$ (see eq's. \eqref{posP2}, \eqref{defXi}). In turn, this estimation on the 
conformal factor and the size of $\lambda$ allows us to control how different the initial data 
$(M,g_{ij},K_{ij}, E^i,B^i)$ and $(M, g_{ij}^0,K_{ij}^0, E^{0i},B^{0i})$ are.

\item Axial symmetry is required to define in a well manner the 
angular momentum of the initial data. Time-rotation symmetry is used to simplify the analysis of the constraint equations as it gives an explicit and simple
 relation
between the fields $K_{ij}, E^i, B^i$ and the potentials. We believe the most important ingredient for our purposes that we obtain from this symmetry is the Yamabe positivity. 

\item In \cite{Dain:2010uh} it was shown that the weighted Sobolev space $H^{'2}_{-1/2}$ 
is specially appropriate for the study of small perturbations of 
solutions to Lichnerowicz equation with a cylindrical end. By small here we mean that the 
structure of the cylindrical end is unchanged by the perturbation (see \cite{Dain:2010uh} for details). This is due to the fact that
the background function satisfies $\Phi_0=\mathcal O(r^{-1/2})$ asymptotically at the cylindrical end. The perturbation is not meant to change the asymptotic structure
of the end, which translates to $u=o(r^{-1/2})$ at the end. This behavior is captured by the $-1/2$ weight in the Sobolev space.

\item The compact support away from the symmetry axis of the metric function  $q$ is required by the regularity desired on the metric $\tilde g$, this guarantees
that there will not be a conical singularity on the axis. On the other hand, in view of \eqref{bndcond}, the compact support of $\omega, \psi, \chi$
implies that there is no change in the angular momentum and charges
of the data. Moreover, the whole horizon structure remains unaltered, in
particular the horizon area will be the same as in the background. This can be seen as follows. The horizon area is computed as
\be
A=\lim_{r\to0}\int_{B_r}ds_{g}
\ee
where $B_r$ is a coordinate ball of radius $r$ and $ds_g$ is the area element with respect to the metric $g_{ij}$. This integral can be written as
\be
A(\lambda)=\lim_{r\to0}\int_{B_r}\Phi^4r^2e^{q_0+\lambda q}\sin\theta d\theta d\phi=\lim_{r\to0}\int_{B_r}(\Phi_0+u)^4r^2e^{q_0+\lambda q}\sin\theta d\theta d\phi
\ee
and using the boundary conditions \eqref{phicil} and $u=o(r^{-1/2})$ we find $A=A_0$.

If one wants to alter, say, the angular momentum, then $\omega$ must have a precise asymptotic 
behavior at $r\to0$ and the axis $\rho=0$. We expect that a different treatment will be necessary
to deal with this case as it is likely that the perturbed solution $\Phi_0+u$ will no longer have the 
same asymptotic behavior, resulting probably in a different character for the end (changing from asymptotically 
cylindrical to asymptotically flat  or giving rise to a naked singularity).

\item The condition of positive Yamabe for the background data $(M,g_{ij})$, \eqref{yamabe}
does not imply a non-negative conformal scalar curvature, $\tilde R_0\geq0$ as is assumed 
in  \cite{Chrusciel:2012np}. That is, $\tilde R_0$ can attain positive, negative and zero values, we
only know that inequality \eqref{Yamcond} is satisfied. However, if on top of Yamabe positivity, we 
assume $\tilde R_0\geq0$, we can estimate how small the deformation parameter $\lambda$ needs 
to be  in order to guarantee the existence of a new solution. We find
\be
\lambda\leq-\frac{\Delta_2q_0}{\Delta_2q}.
\ee 
This condition arises from equation \eqref{R} and the results in \cite{Chrusciel:2012np}.
Note that this does not depend on the size of the $\omega, \psi,\chi$ functions, but only on the perturbation function $q$.

\end{itemize}

\section{Proof of main result}\label{secproof}
\subsection{Sketch of the proof}
The proof uses the Implicit Function theorem to show that there exists a unique solution to \eqref{eqG}. We first investigate the appropriate functional spaces where we expect 
to find the solution. Then prove that the operator $G$ is well defined and continuously differentiable on these spaces. Finally, we prove that the associated linear operator 
$DG$ is an isomorphism. In this last step we use the Riesz Representation Theorem to find a weak solution and then, a regularity theorem to prove that
the weak solution is a strong solution. The Yamabe condition 
\eqref{Yamcond} plays a key role in the last parts of the proof, as it serves as the 
coercivity condition needed for the application of  Riesz Theorem.

In this section we use several constants whose exact value is not relevant, we denote them by $C_i$.

\subsection{Sobolev spaces and neighborhoods used}

We will work with the only non-trivial constraint equation written as \eqref{eqG} and  look for
a solution $u$. In \cite{Dain:2010uh} the authors deal with an analogous map $G$, and choose $G:\mathbb{R}\times H'^2_{-1/2}\ra L'^2_{-5/2}$ considering the fall off behavior of the functions involved. As in our case the asymptotic behavior is the same, we choose the same function spaces.

We are considering the map $G:\mathbb{R}\times H'^2_{-1/2}\ra L'^2_{-5/2}$, but for a general
$u\in H'^2_{-1/2}$ the function $\Phi=\Phi_0+u$ does not have a definite sign. In order for $\Phi$ to
be a proper conformal factor we need it to be positive. As we take $\Phi_0>0$, then we can conjecture 
that if $u$ is small enough, then $\Phi$ is also going to be positive. There are some subtleties in the problem at hand, as we have a particular behavior at the cylindrical end. Even so, it is possible to prove the conjecture, that is, to show that that there is a neighborhood $V$ of $0$ in $H'^2_{-1/2}$ such that
\be
 \Phi_0+u>0.
\ee
We start by noting that as $\Phi_0$ is a proper conformal factor, then it is positive and bounded away from zero if we remove a neighborhood of the cylindrical end. Approaching the cylindrical end, $\Phi_0\ra r^{-1/2}$ as $r\ra 0$, thus we can conclude that there are positive constants $C_1$, $C_2$, $C_3$ and $C_4$ such that
\begin{equation}\label{des1}
 C_1\sqrt{r+C_2}\leq\sqrt{r}\Phi_0\leq C_3\sqrt{r+C_4}.
\end{equation}
The argument in \cite{Dain:2010uh} carries through. Consider the open ball of radius $\xi$ around the origin in $H'^2_{-1/2}$,
\begin{equation}
 V = \{v\in H'^2_{-1/2} : ||v||_{H'^2_{-1/2}}<\xi\},
\end{equation}
where $\xi>0$ is yet to be defined. From Lemma A.1 in \cite{Dain:2010uh} we have that if $\|u\|_{H'^2_{-1/2}}<\xi$ then there is a constant $C$ such that
\begin{equation}\label{posP2}
 \sqrt{r}|u|\leq C\xi.
\end{equation}
Given $\Phi_0$ satisfying \eqref{des1} we find
\begin{equation}\label{des2}
\sqrt{r}(\Phi_0+u) \geq C_1\sqrt{C_2} - C\xi =: C_5,
\end{equation}
and if we choose $\xi$ such that
\begin{equation}\label{defXi}
 0<\xi<\frac{C_1\sqrt{C_2}}{2C}
\end{equation}
then
\begin{equation}
 C_5> \frac{C_1\sqrt{C_2}}{2} >0,
\end{equation}
and therefore
\begin{equation}
 \Phi_0+u > 0.
\end{equation}
From now on $\xi$ and $V$ are fixed. The factor $1/2$ in the r.h.s. of \eqref{defXi} is a technical requirement needed later to perform some bounds.

\subsection{The map $G:\mathbb{R}\times V\ra L'^2_{-5/2}$ is well defined}
To prove that $G:\mathbb{R}\times V\ra L'^2_{-5/2}$ is a well defined map we evaluate $\|G\|_{L'^2_{-5/2}}$,
\begin{eqnarray}\label{GL1}
 \|G(\lambda,u)\|_{L'^2_{-5/2}} & \leq &  \|\Delta u\|_{L'^2_{-5/2}}+\\ \label{GL2}
& +&\Bigg\|\frac{\lambda\Delta_2q}{4}(\Phi_0+u)\Bigg\|_{L'^2_{-5/2}}+\Bigg\|\frac{\lambda\partial\omega(2\partial\omega_0+\lambda\partial\omega)}{16\rho^4(\Phi_0+u)^7}\Bigg\|_{L'^2_{-5/2}} 
 + \\ \label{GL3}
& +& \Bigg\|\frac{\lambda\partial\psi(2\partial\psi_0+\lambda\partial\psi)}{4\rho^2(\Phi_0+u)^3}\Bigg\|_{L'^2_{-5/2}}+
 \Bigg\|\frac{\lambda\partial\chi(2\partial\chi_0+\lambda\partial\chi)}{4\rho^2(\Phi_0+u)^3}\Bigg\|_{L'^2_{-5/2}}+\\ \label{GL4}
 & +& \Bigg\|\frac{\Delta_2q_0}{4}u\Bigg\|_{L'^2_{-5/2}} +\Bigg\|\frac{(\partial\omega_0)^2}{16\rho^4}\left[\frac{1}{(\Phi_0+u)^7}-\frac{1}{\Phi_0^7}\right]\Bigg\|_{L'^2_{-5/2}}+
 \\ \label{GL5}
  &+& \Bigg\|\frac{(\partial\psi_0)^2}{4\rho^2}\left[\frac{1}{(\Phi_0+u)^3}-\frac{1}{\Phi_0^3}\right]\Bigg\|_{L'^2_{-5/2}}+   \Bigg\|\frac{(\partial\chi_0)^2}{4\rho^2}\left[\frac{1}{(\Phi_0+u)^3}-\frac{1}{\Phi_0^3}\right]\Bigg\|_{L'^2_{-5/2}}.
\end{eqnarray}
The term on the r.h.s. of \eqref{GL1} is bounded by the definition of the $H'^2_{-1/2}$ norm. The terms in \eqref{GL2} and 
\eqref{GL3} are bounded due to the compact support of $q$, $\omega$, $\psi$ and $\chi$ respectively. The first term in \eqref{GL4}
is bounded due to $u\in H'^2_{-1/2}$ and the behavior of $q_0$ given in \eqref{derivq}. The remaining three norms are 
bounded due to the asymptotic conditions on the background functions \eqref{derivpot} 
together with the inequalities \eqref{des1} and \eqref{des2}. This can be seen as follows. Use the identity
\be
\frac{1}{a^p}-\frac{1}{b^p}=(b-a)\sum_{i=0}^{p-1}a^{i-p}b^{-1-i}
\ee
to write
\begin{equation}
\frac{1}{\Phi_0^p}- \frac{1}{(\Phi_0+u)^p} = r^{(p+1)/2} u H,
\end{equation}
where
\be
H=\sum_{i=0}^{p-1}[\sqrt{r}(\Phi_0+u)]^{i-p}[\sqrt{r}\Phi_0]^{-1-i}.
\ee
Using \eqref{des1} and \eqref{des2} we see that
\begin{equation}
 H\leq C_6,
\end{equation}
where $C_6$ is a constant that only depends on previous constants. Using the conditions \eqref{derivpot} we can bound for instance
\begin{eqnarray}
 && \Bigg\|\frac{(\partial\psi_0)^2}{4\rho^2}\left[\frac{1}{(\Phi_0+u)^3}-\frac{1}{\Phi_0^3}\right]\Bigg\|_{L'^2_{-5/2}} \leq 
 C_7\Bigg\|\frac{r^{-2}\sin^2\theta}{4\rho^2} (r^2 u H)\Bigg\|_{L'^2_{-5/2}} \\
 && = \frac{C_7 C_6}{4}\Bigg\|\frac{u}{r^2}\Bigg\|_{L'^2_{-5/2}} \leq \frac{C_7 C_6}{4}\|u\|_{L'^2_{-1/2}} \leq \frac{C_7 C_6}{4}\|u\|_{H'^2_{-1/2}}.
\end{eqnarray}
Applying the same argument to the other terms involving $\omega_0$ and $\chi_0$ completes the proof that 
$||G(\lambda,u)||_{L'^2_{-5/2}}$ is bounded and therefore the map is well-defined.

\subsection{The map G is continuously differentiable}
We now prove that $G$ is differentiable. To propose candidates for $D_1G(\lambda,u)$ and $D_2G(\lambda,u)$ we calculate the directional derivatives
\begin{eqnarray}
 D_1G(\lambda,u)[\gamma] := \frac{d}{dt}G(\lambda+t\gamma,u)\Bigg|_{t=0},\\
 D_2G(\lambda,u)[v] := \frac{d}{dt}G(\lambda,u+tv)\Bigg|_{t=0},
\end{eqnarray}
obtaining
\begin{eqnarray}\nonumber
 D_1G(\lambda,u)[\gamma] & = &\Bigg[\frac{\partial\omega(\partial\omega_0+\lambda\partial\omega)}{8\rho^4(\Phi_0+u)^7}+\frac{\Delta_2q}{4}(\Phi_0+u)\\\label{D1G}
 && +\frac{\partial\psi(\partial\psi_0+\lambda\partial\psi)}{2\rho^2(\Phi_0+u)^3}+\frac{\partial\chi(\partial\chi_0+\lambda\partial\chi)}{2\rho^2(\Phi_0+u)^3}\Bigg]\gamma, \\\nonumber
 D_2G(\lambda,u)[v] & = & \Delta v+\Bigg[-\frac{7(\partial\omega_0+\lambda\partial\omega)^2}{16\rho^4(\Phi_0+u)^8}+\frac{\Delta_2q_0+\lambda\Delta_2q}{4} \\\label{D2G}
 && -\frac{3(\partial\psi_0+\lambda\partial\psi)^2}{4\rho^2(\Phi_0+u)^4}-\frac{3(\partial\chi_0+\lambda\partial\chi)^2}{4\rho^2(\Phi_0+u)^4}\Bigg]v.
\end{eqnarray}
We show that the operators are bounded. The $L'^2_{-5/2}$ norm of each term inside square brackets in \eqref{D1G} is bounded due to compact support, the conditions \eqref{derivpot} and the inequality \eqref{des2}, then
\be
 \|D_1G(\lambda,u)[\gamma]\|_{L'^2_{-5/2}} \leq C_8 |\gamma|.
\ee
For the second operator the proof is a bit more tricky. We have
\begin{eqnarray}\nonumber
 \|D_2G(\lambda,u)[v]\|_{L'^2_{-5/2}} & \leq &
\|\Delta v\|_{L'^2_{-5/2}}+\Bigg\|\Bigg[-\frac{7\lambda\partial\omega(2\partial\omega_0+\lambda\partial\omega)}{16\rho^4(\Phi_0+u)^8}-
\frac{7(\partial\omega_0)^2}{16\rho^4(\Phi_0+u)^8}+\\
\nonumber& +&\frac{\Delta_2(q_0+\lambda q)}{4} -\frac{3\lambda\partial\psi(2\partial\psi_0+\lambda\partial\psi)}{4\rho^2(\Phi_0+u)^4} -\frac{3(\partial\psi_0)^2}{4\rho^2(\Phi_0+u)^4}-\\
&-&\frac{3\lambda\partial\chi(2\partial\chi_0+\lambda\partial\chi)}{4\rho^2(\Phi_0+u)^4} -\frac{3(\partial\chi_0)^2}{4\rho^2(\Phi_0+u)^4}
\Bigg]v\Bigg\|_{L'^2_{-5/2}} \\
\nonumber& = & \|\Delta v\|_{L'^2_{-5/2}}+\Bigg\|\Bigg[-\frac{7r^2\lambda\partial\omega(2\partial\omega_0+\lambda\partial\omega)}{16\rho^4(\Phi_0+u)^8}-
 \frac{7r^2(\partial\omega_0)^2}{16\rho^4(\Phi_0+u)^8}+\\
\nonumber&& +\frac{r^2\Delta_2(q_0+\lambda q)}{4} -\frac{3r^2\lambda\partial\psi(2\partial\psi_0+\lambda\partial\psi)}{4\rho^2(\Phi_0+u)^4} -\frac{3r^2(\partial\psi_0)^2}{4\rho^2(\Phi_0+u)^4}-\\
&&-\frac{3r^2\lambda\partial\chi(2\partial\chi_0+\lambda\partial\chi)}{4\rho^2(\Phi_0+u)^4} -\frac{3r^2(\partial\chi_0)^2}{4\rho^2(\Phi_0+u)^4}
\Bigg]\frac{v}{r^2}\Bigg\|_{L'^2_{-5/2}} \\
 & \leq &  \|v\|_{H'^2_{-1/2}} +C_9  \|v\|_{L'^2_{-1/2}} \leq C_{10}\|v\|_{H'^2_{-5/2}},
\end{eqnarray}
where again we have used that $v\in H'^2_{-1/2}$, equations \eqref{des2}, \eqref{derivq} and \eqref{derivpot} and the compact support of $\omega$, $q$, $\psi$ and $\chi$. This proves that the operators $D_1G$ and $D_2G$ are bounded.

To show that $D_1G$ is the partial Fr\'echet derivative we calculate
\begin{eqnarray}
 && G(\lambda+\gamma,u)-G(\lambda,u)-D_1G(\lambda,u)[\gamma]\\
 && =\Bigg[\frac{(\partial\omega)^2}{16\rho^4(\Phi_0+u)^7}+\frac{(\partial\psi)^2}{4\rho^2(\Phi_0+u)^3}+\frac{(\partial\chi)^2}{4\rho^2(\Phi_0+u)^3}\Bigg]\gamma^2,
\end{eqnarray}
and as $\omega$, $\psi$ and $\chi$ have compact support
\begin{equation}
 \|G(\lambda+\gamma,u)-G(\lambda,u)-D_1G(\lambda,u)[\gamma]\|_{L'^2_{-5/2}} \leq C_{11} |\gamma|^2,
\end{equation}
which shows that
\begin{equation}
 \lim_{\gamma\ra0}\frac{\|G(\lambda+\gamma,u)-G(\lambda,u)-D_1G(\lambda,u)[\gamma]\|_{L'^2_{-5/2}}}{|\gamma|}=0.
\end{equation}
For $D_2G$ we have
\begin{eqnarray}
 && G(\lambda,u+v)-G(\lambda,u)-D_2G(\lambda,u)[v] \\
 && = \frac{(\partial\omega_0+\lambda\partial\omega)^2}{16\rho^4}\Bigg[\frac{1}{(\Phi_0+u+v)^7}-\frac{1}{(\Phi_0+u)^7}+\frac{7v}{(\Phi_0+u)^8}\Bigg] \\
 && + \frac{(\partial\psi_0+\lambda\partial\psi)^2}{4\rho^2}\Bigg[\frac{1}{(\Phi_0+u+v)^3}-\frac{1}{(\Phi_0+u)^3}+\frac{3v}{(\Phi_0+u)^4}\Bigg] \\
 && + \frac{(\partial\chi_0+\lambda\partial\chi)^2}{4\rho^2}\Bigg[\frac{1}{(\Phi_0+u+v)^3}-\frac{1}{(\Phi_0+u)^3}+\frac{3v}{(\Phi_0+u)^4}\Bigg] \\
 && = \frac{(\partial\omega_0+\lambda\partial\omega)^2}{16\rho^4}r^\frac{9}{2}v^2H_1 + \left(\frac{(\partial\psi_0+\lambda\partial\psi)^2}{4\rho^2}+ \frac{(\partial\chi_0+\lambda\partial\chi)^2}{4\rho^2}\right)r^\frac{5}{2}v^2H_2 \\
 && = \Bigg[\frac{\lambda\partial\omega(2\partial\omega_0+\lambda\partial\omega)}{16\rho^4}r^6H_1+\frac{(\partial\omega_0)^2}{16\rho^4}r^6H_1+\\
 &&+\frac{\lambda\partial\psi(2\partial\psi_0+\lambda\partial\psi)}{4\rho^2}r^4 H_2 +\frac{(\partial\psi_0)^2}{4\rho^2}r^4 H_2\\
 &&+\frac{\lambda\partial\chi(2\partial\chi_0+\lambda\partial\chi)}{4\rho^2}r^4 H_2 +\frac{(\partial\chi_0)^2}{4\rho^2}r^4 H_2\Bigg]\frac{v^2}{r^\frac{3}{2}}
\end{eqnarray}
where $H_1$ is as in \cite{Dain:2010uh} and satisfies $|H_1|<C_{12}$ and $H_2$ is given by
\begin{equation}
 H_2=\frac{1}{[\sqrt{r}(\Phi_0+u+v)]^3[\sqrt{r}(\Phi_0+u)]^4}\sum_{i=0}^2 C_{i}[\sqrt{r}(\Phi_0+u)]^{2-i}(\sqrt{r}v)^i,
\end{equation}
with $C_{i}$ numerical constants and satisfy
\begin{equation}
 |H_2|\leq \frac{1}{(C_1\sqrt{r+C_2}-2C\xi)^7}\sum_{i=0}^2 |C_{i}|(C_3\sqrt{r+C_4}+C\xi)^{2-i}(C\xi)^i \leq C_{13}.
\end{equation}
Using that $\omega$, $\psi$ and $\chi$ have compact support
\be
\|G(\lambda,u+v)-G(\lambda,u)-D_2G(\lambda,u)[v]\|_{L'^2_{-5/2}}\leq C_{14}\Bigg\|\frac{v^2}{r^\frac{3}{2}}\Bigg\|_{L'^2_{-5/2}}\leq C_{15}\|v\|_{H'^2_{-1/2}}^2,
\ee
where the last inequality has been calculated in \cite{Dain:2010uh}. This proves that $D_2G$ is the Fr\'echet partial derivative.

The next step is to prove continuity of the derivatives. We compute \footnote{Note that eq (55) in \cite{Dain:2010uh} has a typo. It should be a $D_2$ derivative}.
\begin{eqnarray}
&& \|D_1G(\lambda_1,u)[\gamma]-D_1G(\lambda_2,u)[\gamma]\|_{L'^2_{-5/2}} \\
&& = \Bigg\|\Bigg[\frac{(\partial\omega)^2}{8\rho^4(\Phi_0+u)^7}+\frac{(\partial\psi)^2}{2\rho^2(\Phi_0+u)^3}+\frac{(\partial\chi)^2}{2\rho^2(\Phi_0+u)^3}\Bigg]\gamma(\lambda_1-\lambda_2)\Bigg\|_{L'^2_{-5/2}} \\
&& \leq \Bigg[\Bigg\|\frac{(\partial\omega)^2}{8\rho^4(\Phi_0+u)^7}\Bigg\|_{L'^2_{-5/2}}+\Bigg\|\frac{(\partial\psi)^2}{2\rho^2(\Phi_0+u)^3}+\Bigg\|_{L'^2_{-5/2}}\\
&& +\Bigg\|\frac{(\partial\chi)^2}{2\rho^2(\Phi_0+u)^3}+\Bigg\|_{L'^2_{-5/2}}\Bigg]|\gamma|\,|\lambda_1-\lambda_2| \\
&& \leq C_{16} |\gamma|\,|\lambda_1-\lambda_2|,
\end{eqnarray}
where again we used compact support and the bound \eqref{des2}.
We also compute
\begin{eqnarray}\nonumber
&& \|D_2G(\lambda,u_1)[v]-D_2G(\lambda,u_2)[v]\|_{L'^2_{-5/2}} \\
\nonumber&& = \Bigg\|\Bigg\{\frac{7(\partial\omega_0+\lambda\partial\omega)^2}{16\rho^4}\Bigg[\frac{1}{(\Phi_0+u_2)^8}- \frac{1}{(\Phi_0+u_1)^8}\Bigg] \\
\nonumber&& +\frac{3(\partial\psi_0+\lambda\partial\psi)^2}{4\rho^2}\Bigg[\frac{1}{(\Phi_0+u_2)^4}- \frac{1}{(\Phi_0+u_1)^4}\Bigg]+\\
\label{eqcon1}&&+\frac{3(\partial\chi_0+\lambda\partial\chi)^2}{4\rho^2}\Bigg[\frac{1}{(\Phi_0+u_2)^4}- \frac{1}{(\Phi_0+u_1)^4}\Bigg]\Bigg\}v\Bigg\|_{L'^2_{-5/2}} \\
\nonumber&& = \Bigg\|\Bigg\{\frac{7(\partial\omega_0+\lambda\partial\omega)^2}{16\rho^4}r^{9/2}H_3 +\frac{3(\partial\psi_0+\lambda\partial\psi)^2}{4\rho^2}r^{5/2}H_4 \\
\label{eqcon2}&&+\frac{3(\partial\chi_0+\lambda\partial\chi)^2}{4\rho^2}r^{5/2}H_4\Bigg\}v(u_2-u_1)\Bigg\|_{L'^2_{-5/2}} \\
\nonumber&& = \Bigg\|\Bigg\{\frac{7\lambda\partial\omega(2\partial\omega_0+\lambda\partial\omega)}{16\rho^4}r^6H_3+\frac{7(\partial\omega_0)^2}{16\rho^4}r^6H_3 +\\
\nonumber&&\frac{3\lambda\partial\psi(2\partial\psi_0+\lambda\partial\psi)}{4\rho^2}r^4H_4 +\frac{3(\partial\psi_0)^2}{4\rho^2}r^4 H_4 +\\
\label{eqcon3}&&\frac{3\lambda\partial\chi(2\partial\chi_0+\lambda\partial\chi)}{4\rho^2}r^4H_4+\frac{3(\partial\chi_0)^2}{4\rho^2}r^4 H_4 \Bigg\}\frac{v(u_2-u_1)}{r^{3/2}}\Bigg\|_{L'^2_{-5/2}} \\
&& \leq C_{17}  \Bigg\|\frac{v(u_2-u_1)}{r^\frac{3}{2}}\Bigg\|_{L'^2_{-5/2}} \leq C_{18} \|v\|_{H'^2_{-1/2}}\|u_1-u_2\|_{H'^2_{-1/2}},\label{eqcon5}
\end{eqnarray}
where to go from \eqref{eqcon1} to \eqref{eqcon2} we used
\be
r^{-9/2}\left(\frac{1}{(\Phi_0+u_1)^8}-\frac{1}{(\Phi_0+u_2)^8}\right)=(u_2-u_1)H_3
\ee
and
\be
r^{-5/2}\left(\frac{1}{(\Phi_0+u_1)^4}-\frac{1}{(\Phi_0+u_2)^4}\right)=(u_2-u_1)H_4
\ee
with
\be
H_3:=\sum_{i=0}^{7}(\sqrt{r}(\Phi_0+u_1))^{i-8}(\sqrt{r}(\Phi_0+u_2))^{-1-i},
\ee
\be
 H_4:=\sum_{i=0}^{3}(\sqrt{r}(\Phi_0+u_1))^{i-4}(\sqrt{r}(\Phi_0+u_2))^{-1-i}.
\ee
Lines \eqref{eqcon3} are merely a convenient re-writing of \eqref{eqcon2}. To go from \eqref{eqcon3} to \eqref{eqcon5} we use the asymptotic conditions on the 
background functions
\eqref{derivpot}, that $\omega$, $\psi$ and $\chi$ have compact support, the bounds
\be
|H_3|\leq C_{19},\qquad |H_4|\leq C_{20}
\ee
and combined all the constants into $C_{17}$. Then we have 
\begin{equation}
\|D_2G(\lambda,u_1)[v]-D_2G(\lambda,u_2)[v]\|_{L'^2_{-5/2}}\leq C_6\|v\|_{H'^2_{-1/2}}\|u_1-u_2\|_{H'^2_{-1/2}} 
\end{equation}
proving that the derivative operator \eqref{D2G} is also continuous.

\subsection{The map $D_2G(0,0):H'^2_{-1/2}\ra L'^2_{-1/2}$ is an isomorphism}
Finally, we need to prove that $\mathcal L:=-D_2G(0,0):H'^2_{-1/2}\ra L'^2_{-1/2}$ is an isomorphism. 
As in \cite{Dain:2010uh}, we can write
\begin{equation}\label{oppositive}
 D_2G(0,0)[v]=\Delta v-\alpha v,
\end{equation}
\begin{equation}\label{alpha0}
 \alpha=-\frac{\Delta_2q_0}{4}+7\frac{(\partial\omega_0)^2}{16\rho^4\Phi_0^8}+3\frac{(\partial\psi_0)^2+(\partial\chi_0)^2}{4\rho^2\Phi_0^4},
\end{equation}
which can be written as 
\begin{equation}\label{alphah}
 \alpha=h r^{-2}
\end{equation}
and $h$ is a bounded function in $\mathbb{R}^3$ with $h\in L^2(M)$. In \cite{gabach09} it was proven that when $h$ is positive, the operator \eqref{oppositive} is an isomorphism. 
In general, due to the first term in \eqref{alpha0}, $\alpha$ is not necessarily positive.  
However, here is where the Yamabe positivity condition plays a role. We have the 
following important result.

\begin{lemma}
 Let $(M,\tilde g_{ij})$ be in the positive Yamabe class, namely
\be 
 \int_M|\partial f|_{\tilde g} ^2+\tilde Rf^2 d\mu_g>0
 \ee
 for all $f\in C_0^\infty$, $f\neq0$, then
\be\label{yamflat} 
 \int_M|\partial f|^2+\alpha f^2 d\mu>0
 \ee
where $\alpha$ is given in \eqref{alpha0} and  the norm and volume
element in \eqref{yamflat} are computed with respect to the flat metric.
 
\end{lemma}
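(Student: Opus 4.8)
The plan is to exploit the conformal structure of $\tilde g_{ij}$ in \eqref{metric} to rewrite the flat functional in \eqref{yamflat} in terms of the Yamabe functional of $\tilde g_{ij}$, after first discarding the manifestly nonnegative pieces of $\alpha$. I would begin by recording three elementary facts. From \eqref{metric} the flat and $\tilde g$ volume elements satisfy $d\mu_{\tilde g}=e^{2q_0}\,d\mu$, while \eqref{R} gives $\tilde R=-2e^{-2q_0}\Delta_2q_0$; hence $\tilde R\,d\mu_{\tilde g}=-2\Delta_2q_0\,d\mu$, and in particular $-\tfrac14\Delta_2q_0\,f^2\,d\mu=\tfrac18\,\tilde R f^2\,d\mu_{\tilde g}$. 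Second, for axially symmetric $f$ the $\phi$-derivative is absent, so $|\partial f|_{\tilde g}^2\,d\mu_{\tilde g}=e^{-2q_0}\big((\partial_\rho f)^2+(\partial_z f)^2\big)e^{2q_0}\rho\,d\rho\,dz\,d\phi=|\partial f|^2\,d\mu$. Third, in \eqref{alpha0} the angular-momentum term and the two electromagnetic terms are squares over positive quantities, so $\alpha\ge-\Delta_2q_0/4$ pointwise, and it is enough to prove $\int_M|\partial f|^2-\tfrac14\Delta_2q_0\,f^2\,d\mu>0$.

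Granting these, for axially symmetric $f$ one computes
\begin{equation}
\begin{split}
\int_M|\partial f|^2-\tfrac14\Delta_2q_0\,f^2\,d\mu
&=\int_M|\partial f|_{\tilde g}^2\,d\mu_{\tilde g}+\tfrac18\int_M\tilde R\,f^2\,d\mu_{\tilde g}\\
&=\tfrac18\Big(\int_M|\partial f|_{\tilde g}^2+\tilde R f^2\,d\mu_{\tilde g}\Big)+\tfrac78\int_M|\partial f|_{\tilde g}^2\,d\mu_{\tilde g},
\end{split}
\end{equation}
the factor $\tfrac18$ entering because $-\tfrac14\Delta_2q_0=\tfrac18(-2\Delta_2q_0)$. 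By the Yamabe hypothesis \eqref{Yamcond} the first bracket is strictly positive for $f\neq0$, and the second term is nonnegative, which proves \eqref{yamflat} for axially symmetric $f$. For a general $f\in C^\infty_0$ I would Fourier expand in $\phi$, $f=\sum_k g_k(\rho,z)e^{ik\phi}$; the functional diagonalizes over modes and the $k$-th one contributes, up to a positive constant, $\int\big(|\partial g_k|^2+k^2\rho^{-2}g_k^2+\alpha g_k^2\big)\rho\,d\rho\,dz\ge\int\big(|\partial g_k|^2+\alpha g_k^2\big)\rho\,d\rho\,dz$, to which the axially symmetric estimate applies (for $k\neq0$ one needs a routine density argument, since the axisymmetric function $g_k$ is then only $H^1$, not smooth, along the axis). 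If ``$f\in C^\infty_0$'' is understood to mean axially symmetric throughout the paper, this last step is unnecessary.

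The computation itself is routine; what takes a little recognition is exactly the choice made in the first paragraph: that the three positive terms of $\alpha$ may simply be thrown away, and that the surviving term $-\tfrac14\Delta_2q_0\,f^2\,d\mu$ equals $\tfrac18\,\tilde R f^2\,d\mu_{\tilde g}$ after the conformal rescaling $g_{ij}=\Phi^4\tilde g_{ij}$, so that the hypothesis feeds in verbatim. It is worth noting that because the positivity condition \eqref{yamabe} carries the coefficient $1$ rather than the conformal-Laplacian coefficient $8$ (in three dimensions $4(n-1)/(n-2)=8$), the argument has extra room to spare, namely the harmless $\tfrac78\int_M|\partial f|_{\tilde g}^2\,d\mu_{\tilde g}$; with coefficient $8$ and without the positive terms of $\alpha$ the two sides of the displayed identity would in fact be equal. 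This is really the only subtle point, and it is a matter of seeing the conformal structure rather than of estimation.
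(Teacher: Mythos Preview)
Your proof is correct and follows the same route as the paper's: drop the manifestly nonnegative terms in $\alpha$, then use the conformal identities $d\mu_{\tilde g}=e^{2q_0}\,d\mu$, $|\partial f|_{\tilde g}^2=e^{-2q_0}|\partial f|^2$ (for axisymmetric $f$), and $\tilde R=-2e^{-2q_0}\Delta_2 q_0$ to reduce to the Yamabe hypothesis \eqref{Yamcond}. Your explicit $\tfrac18/\tfrac78$ split is in fact more careful than the paper's own computation, which writes $\int_M|\partial f|^2-\tfrac14\Delta_2q_0\,f^2\,d\mu=\int_M|\partial f|_{\tilde g}^2+\tilde R f^2\,d\mu_{\tilde g}$ with the wrong coefficient on $\tilde R$ (harmless for the conclusion, exactly for the reason you identify), and your Fourier-mode remark on non-axisymmetric $f$ addresses a case the paper leaves implicit.
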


\textbf{Proof}
We start with the left hand side of \eqref{yamflat}
 \begin{eqnarray}
 & &\int_M\left[|\partial f|^2+\alpha f^2\right]d\mu=\\
&=& \int_M\left[|\partial f|^2 -\frac{\Delta_2q_0}{4}f^2\right]d\mu+\int_M\left(7\frac{(\partial\omega_0)^2}{16\rho^4\Phi_0^8}+3\frac{(\partial\psi_0)^2+(\partial\chi_0)^2}{4\rho^2\Phi_0^4}\right)f^2d\mu\geq\\ \label{penu}
&\geq&\int_M\left[e^{-2q_0}|\partial f|^2 -2e^{-2q_0}\frac{\Delta_2q_0}{8}f^2\right]e^{2q_0}d\mu=\\\label{yama}
&=&\int_M\left[|\partial f|_{\t{g}}^2 +\tilde R f^2\right]d\mu_{\t{g}}>0,
\end{eqnarray}
which proves the claim. Note that in order to go from \eqref{penu} to \eqref{yama} we have 
used the background metric $\tilde g=e^{2q}(dr^2+r^2 d\theta^2)+r^2\sin^2\theta d\phi^2$.
\begin{flushright}$\Box$\end{flushright}

 \begin{theorem}\label{teoiso}
The linear map $\mathcal L$ defined by
\begin{equation}\label{ecL}
\mathcal Lu:=-\Delta u+\alpha u=f \qquad \mbox{in}\; \mathbb R^{3}\setminus\{0\},
\end{equation}
where $\alpha$ is given by \eqref{alpha0}-\eqref{alphah} and satisfies \eqref{yamflat}, is an isomorphism $H^{'2}_{-1/2}\to L^{'2}_{-5/2}$.
\end{theorem}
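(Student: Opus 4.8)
The natural plan is to derive the isomorphism from the Lax--Milgram theorem (equivalently, since the relevant form is symmetric, from Riesz representation) together with elliptic regularity in the weighted spaces, following the scheme of \cite{gabach09,Dain:2010uh}; the only ingredient that does not transfer verbatim is the coercivity of the associated bilinear form, which in \cite{Dain:2010uh} rested on the sign of $h$ and which must now be drawn from the Yamabe positivity \eqref{yamflat}. First I would record boundedness: $\mathcal L:H'^2_{-1/2}\to L'^2_{-5/2}$ is bounded since $\|\Delta u\|_{L'^2_{-5/2}}\le\|u\|_{H'^2_{-1/2}}$ by the definition of the norms, while $\|\alpha u\|_{L'^2_{-5/2}}=\|hr^{-2}u\|_{L'^2_{-5/2}}\le\|h\|_\infty\|u\|_{L'^2_{-1/2}}$ because $\alpha=hr^{-2}$ with $h\in L^\infty$. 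Next, let $\mathcal H$ be an energy space built by completing $C^\infty_0(M)$ in a Dirichlet-type norm; Hardy's inequality $\int_M v^2r^{-2}\,d\mu\le 4\int_M|\partial v|^2\,d\mu$ gives a continuous embedding $\mathcal H\hookrightarrow L'^2_{-1/2}$ together with $H'^2_{-1/2}\hookrightarrow\mathcal H$. The symmetric form $B(u,v)=\int_M(\partial u\cdot\partial v+\alpha uv)\,d\mu$ is then bounded on $\mathcal H$, and, since $L'^2_{-5/2}$ is the dual of $L'^2_{-1/2}$ under the unweighted pairing, each $f\in L'^2_{-5/2}$ defines a bounded functional $v\mapsto\int_M fv\,d\mu$ on $\mathcal H$; a weak solution of $\mathcal Lu=f$ is a $u\in\mathcal H$ with $B(u,v)=\int_M fv\,d\mu$ for every $v$.

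The crux is coercivity of $B$, and the cleanest way I see to extract it from \eqref{yamflat} is to make the positivity structural. Using the background Lichnerowicz equation \eqref{const} to eliminate $\Delta_2q_0$ from \eqref{alpha0}, one rewrites $\alpha=\Delta\Phi_0/\Phi_0+V$ with $V=\tfrac{(\partial\omega_0)^2}{2\rho^4\Phi_0^8}+\tfrac{(\partial\psi_0)^2+(\partial\chi_0)^2}{\rho^2\Phi_0^4}\ge 0$; hence $\Phi_0>0$ solves $-\Delta\Phi_0+(\Delta\Phi_0/\Phi_0)\Phi_0=0$, and the substitution $v=\Phi_0 w$ turns the quadratic form into $B(v,v)=\int_M\Phi_0^2\big(|\partial w|^2+Vw^2\big)\,d\mu$ after an integration by parts whose boundary contributions vanish along suitable spheres $r\to 0$ and $r\to\infty$ (using \eqref{phicil}). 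Thus $B$ is an inner product dominating the weighted Dirichlet energy $\int_M\Phi_0^2|\partial(v/\Phi_0)|^2\,d\mu$; invariantly, the positivity of $B$ is precisely the statement \eqref{yamflat}. I expect this step to be the main obstacle --- not through any hard algebra but through the functional analysis at the cylindrical end, where $\alpha\sim r^{-2}$ is Hardy-critical: one must identify the correct Hilbert space on which $B$ is genuinely coercive, verify the density of $C^\infty_0$ in it and the embedding into $L'^2_{-1/2}$, and justify the boundary terms above, and it is exactly here that the detailed background asymptotics \eqref{derivq}--\eqref{derivpot} and the behavior $\Phi_0=\mathcal O(r^{-1/2})$ are used, as in \cite{gabach09,Dain:2010uh}.

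Granted coercivity, Lax--Milgram produces for every $f\in L'^2_{-5/2}$ a unique weak solution $u$ in the energy space with $\|u\|\le C\|f\|_{L'^2_{-5/2}}$; taking $f=0$ gives injectivity of $\mathcal L$ on $H'^2_{-1/2}$, since $H'^2_{-1/2}$ lies inside that space. It then remains to bootstrap regularity: from $u\in L'^2_{-1/2}$ we get $\alpha u=hr^{-2}u\in L'^2_{-5/2}$, so $\Delta u=\alpha u-f\in L'^2_{-5/2}$, and because $-\Delta:H'^2_{-1/2}\to L'^2_{-5/2}$ is an isomorphism for the non-exceptional weight $-1/2$ (\cite{Bartnik86}, the fact already invoked in \cite{gabach09}), weighted elliptic estimates give $u\in H'^2_{-1/2}$ with $\|u\|_{H'^2_{-1/2}}\le C\|f\|_{L'^2_{-5/2}}$. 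Hence $\mathcal L$ is a bounded bijection with bounded inverse, i.e.\ an isomorphism $H'^2_{-1/2}\to L'^2_{-5/2}$.
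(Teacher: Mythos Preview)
Your proposal is correct and follows the same overall architecture as the paper: boundedness of $\mathcal L$, a weak formulation via the symmetric bilinear form $B(u,v)=\int_M(\partial u\cdot\partial v+\alpha uv)\,d\mu$, existence/uniqueness of a weak solution by Riesz (equivalently Lax--Milgram), and then a regularity bootstrap to land in $H'^2_{-1/2}$. The paper's regularity step quotes Lemma~A.3 of \cite{gabach09}, which is exactly the weighted elliptic estimate you invoke through the Bartnik isomorphism for $-\Delta$ at the non-exceptional weight $-1/2$.

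The one genuine difference is how positivity/coercivity of $B$ is obtained. The paper takes \eqref{yamflat} as a hypothesis of the theorem (it is proved just before, from the Yamabe positivity of $(M,\tilde g)$) and uses it directly to conclude that $B$ is positive definite, then applies Riesz on $H'^1_{-1/2}$; no further structural manipulation of $\alpha$ is performed. You instead rederive the positivity by the ground-state substitution $v=\Phi_0 w$, using the background Lichnerowicz equation to write $\alpha=\Delta\Phi_0/\Phi_0+V$ with $V\ge 0$ and obtaining $B(v,v)=\int_M\Phi_0^2(|\partial w|^2+Vw^2)\,d\mu$. Both routes are valid: the paper's is shorter since \eqref{yamflat} is already in hand, while yours is more explicit and makes the positivity manifest without appealing to the conformal Yamabe argument, at the cost of having to justify the boundary terms at the cylindrical end using \eqref{phicil}, \eqref{derivq}--\eqref{derivpot}. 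Your caution about identifying the correct energy space and checking density is well placed; the paper sidesteps this by working directly on $H'^1_{-1/2}$ and citing \cite{gabach09} for the analytic details.
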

The proof of this result will be given below  and departs slightly from \cite{Dain:2010uh} because we exploit the symmetry of the weak problem
associated to \eqref{ecL} to apply the Riesz Representation theorem instead of Lax-Milgram theorem used in \cite{gabach09}. 
This is important as we will no longer need to prove the coercivity condition.

We first prove the existence of a weak solution (Lemma \ref{lemadebil}) and then we find it to be regular.

\begin{lemma} \label{lemadebil}
There exists a unique weak solution $u\in H^{'1}_{-1/2}$ of \eqref{ecL}
for each $f\in L^{'2}_{-5/2}$.
\end{lemma}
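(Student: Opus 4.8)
The plan is to set up the weak formulation of $\mathcal L u = f$ on the Hilbert space $H = H^{'1}_{-1/2}$ and apply the Riesz Representation theorem, using the Yamabe positivity inequality \eqref{yamflat} to show that the natural bilinear form is an inner product equivalent to the standard one. Define
\begin{equation}\label{weakform}
B(u,v) := \int_M \left( \partial u \cdot \partial v + \alpha\, u v \right) d\mu ,
\end{equation}
and say that $u \in H^{'1}_{-1/2}$ is a weak solution if $B(u,v) = \int_M f v \, d\mu$ for all $v \in C_0^\infty$, hence for all $v \in H^{'1}_{-1/2}$ by density. First I would check that $B$ is well defined and bounded on $H^{'1}_{-1/2} \times H^{'1}_{-1/2}$: the gradient term is controlled directly by the norm, and the potential term is handled by writing $\alpha = h r^{-2}$ with $h$ bounded (as in \eqref{alphah}), so that $\int_M |\alpha| |u| |v| \, d\mu \leq C \| u/r \|_{L^2} \| v/r \|_{L^2} \leq C \| u \|_{L^{'2}_{-1/2}} \| v \|_{L^{'2}_{-1/2}}$, after using the weighted Hardy (Poincar\'e) inequality on $\mathbb{R}^3 \setminus \{0\}$ that bounds $\| w/r \|_{L^2}$ by $\| \partial w \|_{L^2}$ for $w$ in the appropriate weighted space. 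I would also check the right-hand side functional $v \mapsto \int_M f v\, d\mu$ is bounded on $H^{'1}_{-1/2}$ for $f \in L^{'2}_{-5/2}$, again by Hardy: $|\int_M f v| \leq \| f \|_{L^{'2}_{-5/2}} \| v \|_{L^{'2}_{-1/2}} \leq C \| f \|_{L^{'2}_{-5/2}} \| v \|_{H^{'1}_{-1/2}}$.

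The key step is to show that $B$ is coercive, i.e. $B(u,u) \geq c \| u \|_{H^{'1}_{-1/2}}^2$, which is exactly where the Yamabe condition enters. The Lemma preceding Theorem \ref{teoiso} gives $\int_M ( |\partial f|^2 + \alpha f^2 ) \, d\mu > 0$ for all $f \in C_0^\infty$, $f \neq 0$; but strict positivity pointwise is not the same as a coercivity estimate with a uniform constant, so the real work is upgrading \eqref{yamflat} to
\begin{equation}
B(u,u) \geq c \int_M \left( |\partial u|^2 + \frac{u^2}{r^2} \right) d\mu .
\end{equation}
The standard route is: split $\alpha = \alpha_+ - \alpha_-$ where $\alpha_- = (\Delta_2 q_0)_+ / 4$ comes only from the (compactly supported away from the axis, decaying) $q_0$ term, so $\alpha_-$ is supported in a fixed compact region and is relatively form-bounded with respect to $-\Delta$ with bound strictly less than $1$ — this is where one invokes \eqref{yamflat} together with a compactness argument (the embedding of $H^{'1}_{-1/2}$ restricted to a compact set into $L^2$ is compact) to rule out a minimizing sequence of the Rayleigh quotient degenerating. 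Concretely, if no uniform $c$ existed there would be $u_n$ with $\| u_n \|_{H^{'1}_{-1/2}} = 1$ and $B(u_n, u_n) \to 0$; passing to a weakly convergent subsequence and using compactness on compact sets plus the decay of $\alpha_-$ at infinity and near the origin, the limit $u_\infty$ would satisfy $B(u_\infty, u_\infty) \leq 0$ with $u_\infty \neq 0$ unless $\| \partial u_n \|_{L^2} \to 0$, and either alternative contradicts \eqref{yamflat} (the first directly, the second because then $\| u_n / r \|_{L^2} \to 0$ by Hardy, contradicting $\| u_n \| = 1$). I expect this coercivity upgrade to be the main obstacle, since it requires carefully combining the Hardy inequality, the compact-support/decay structure of $\alpha_-$, and a concentration-compactness type argument to convert the strict sign condition into a quantitative bound.

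Once coercivity and boundedness of $B$ are established, $B$ is an inner product on $H^{'1}_{-1/2}$ equivalent to the standard one, so $(H^{'1}_{-1/2}, B)$ is a Hilbert space; the bounded linear functional $v \mapsto \int_M f v\, d\mu$ is then represented by a unique $u \in H^{'1}_{-1/2}$ with $B(u,v) = \int_M f v\, d\mu$ for all $v$, which is the desired weak solution, and uniqueness is immediate from coercivity (if $B(u,v) = 0$ for all $v$ then $B(u,u) = 0$ forces $u = 0$). This is precisely the point emphasized in the remark after Theorem \ref{teoiso}: because the weak problem is symmetric, Riesz applies directly and no separate coercivity-for-Lax-Milgram argument in the non-symmetric sense is needed — though, as noted, one still needs the quantitative positivity of $B$, which is supplied by the Yamabe lemma. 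The regularity promised for Theorem \ref{teoiso} (that this weak solution actually lies in $H^{'2}_{-1/2}$ and solves \eqref{ecL} strongly) is deferred to the subsequent step and uses elliptic regularity for $-\Delta$ on weighted spaces together with the bound $\alpha u \in L^{'2}_{-5/2}$.
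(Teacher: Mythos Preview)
Your overall setup matches the paper's exactly: both define the bilinear form $B[u,v]=\int(\partial u\cdot\partial v+\alpha uv)\,d\mu$, verify boundedness via $\alpha=hr^{-2}$ with $h$ bounded, check that $v\mapsto\int fv$ is a bounded functional, and invoke the Riesz Representation theorem. The paper, however, explicitly declines to prove a quantitative coercivity estimate---it argues that positive-definiteness of $B$ (from the Yamabe lemma), symmetry, and boundedness already make $B$ an inner product, and applies Riesz directly on that basis. You instead try to upgrade the strict positivity \eqref{yamflat} to a uniform bound $B(u,u)\geq c\|u\|^2_{H'^1_{-1/2}}$. That instinct is reasonable (without norm equivalence one cannot conclude that $(H'^1_{-1/2},B)$ is complete, so the paper's shortcut is arguably glib), but it is extra work relative to what the paper actually does.

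More importantly, your coercivity argument contains a factual error. You assert that $\alpha_-=(\Delta_2 q_0)_+/4$ is ``supported in a fixed compact region'' because $q_0$ is ``compactly supported away from the axis''. This is false: the background metric function $q_0$ is \emph{not} compactly supported---only the perturbation $q$ is (see the hypotheses of Theorem~\ref{thm}). The asymptotics of $q_0$ are merely $q_0=\mathcal O(1)$ and $\Delta_2 q_0=\mathcal O(r^{-2})$ (equations \eqref{phicil} and \eqref{derivq}). Hence $\alpha_-$ need not have compact support, and the compactness step in your contradiction argument (strong $L^2$ convergence on the support of $\alpha_-$) does not go through as written. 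A corrected version would have to rely only on the $r^{-2}$ behaviour of $\alpha_-$ together with Hardy-type control, and would need to rule out concentration at the origin and escape to infinity of the minimizing sequence---neither of which your sketch addresses.
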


\begin{proof}
For $u,v\in H^{'1}_{-1/2}$, define the bilinear form 
\begin{equation}\label{Bil}
B[u,v]:=\int_{\Rt}\partial u\cdot \partial v+\alpha uv\,d\mu
\end{equation}
which corresponds to the linear operator $\mathcal L$.

Let us check that $B[\;,\,]$ satisfies the hypotheses of Riesz Representation theorem (see 
\cite{Evans98}). We first need to prove that the $B[u,v]$ can be taken as an inner product on 
$H^{'1}_{-1/2}\times H^{'1}_{-1/2}$. By the Yamabe condition we know that for all $u\not\equiv0$, $B[u,u]>0$ and also, by definition, if 
$u\equiv0$, then $B[u,u]=0$. Therefore, the bilinear form is positive definite. Second, it can 
easily be proven that $B[u,v]=B[v,u]$ and that $B[u,av+cw]=aB[u,v]+bB[u,w]$. Therefore, $B[u,v]$ is 
an inner product. Next we need to prove that the linear functional $\ell(\cdot):=B[\cdot,v]$ is 
bounded for all $v\in H^{'1}_{-1/2}$. This is done exactly as in \cite{gabach09} 
\begin{eqnarray}
\left|B[u,v]\right|&\leq& \left|\int\partial u\cdot\partial v d\mu\right|+\left|\int \alpha u v d\mu\right|\\
&\leq& |\partial u|_{L^2}|\partial v|_{L^2}+C|ur^{-1}|_{L^2}| ur^{-1}|_{L^2}\\
&\leq& |\partial u|_{L^2}|\partial v|_{L^2}+C|u|_{L'^2_{-1/2}}| u|_{L'^2_{-1/2}}\\
&\leq&\max\{1,C\} |u|_{H'^1_{-1/2}}| u|_{H'^1_{-1/2}}.
\end{eqnarray}

Then with these conditions fulfilled, Riesz Representation Theorem states that there exists a 
unique $u\in H^{'1}_{-1/2}$  such that
\begin{equation}
B[u,v]=\langle f,v\rangle,\qquad \forall v\in H^{'1}_{-1/2}, 
\end{equation}
that is, such that
\begin{equation}
\int_{\Rt}(\mathcal Lu-f)vdx=0,\qquad \forall v\in H^{'1}_{-1/2}.
\end{equation}
Therefore $u$ is the unique weak solution of $\mathcal Lu=f$. 
\end{proof}

Next, we use Lemma A.3 in \cite{gabach09} to prove regularity of solution, namely

\vspace{0.25cm}

\textit{\textbf{Lemma A.3 in \cite{gabach09}.} Let $f\in L^ {'2}_{-5/2}$. Assume $u\in H^ {'1}_{-1/2}$ is a weak solution of $\mathcal Lu=f$. Then $u\in H^ {'2}_{-1/2}$.}

\vspace{0.25cm}

These two lemmas show that there exists a unique function $u\in H^{'2}_{-1/2}$ which solves equation $-\Delta u+\alpha u=f$ a.e, for each $f\in L^{'2}_{-5/2}$. This, in turn, means that $\mathcal L:=-\Delta+\alpha$ is an isomorphism $H^{'2}_{-1/2}\to L^{'2}_{-5/2}$, proving Theorem \ref{teoiso}.

\appendix
\section{Time-rotation symmetry}\label{Aptimerotation}

An axially symmetric initial data $(M,g_{ij},K_{ij}, E^i,B^i)$ has the time-rotation symmetry if, in the coordinates associated with the axial symmetry, under the map $\phi\to-\phi$ the initial data map as
\be
g_{ij}\to g_{ij}, \qquad K_{ij}\to-K_{ij},\qquad
E^i\to E^i,\qquad B^i\to B^i.
\ee
This symmetry on the level of the initial data implies that the development is invariant under the transformation $(t,\phi)\to(-t,-\phi)$ (see \cite{Bardeen70}, \cite{Hawking73b}, \cite{Brandt:1996si}).

Using the symmetries it can be concluded that (see \cite{Chrusciel:2009ki}, \cite{Dain06c})
\begin{equation}
 \t{K}^{ij}\t{K}_{ij} =  \frac{|D\omega|_{\t{g}}^2}{2|\eta|_{\t{g}}^4}=e^{-2q}\frac{(\partial\omega)^2}{2\rho^4}.
\end{equation}

We consider now the electric and magnetic fields. Since the initial data is axially symmetric, the components of the fields in $(\rho,z,\phi)$ coordinates do not depend in the $\phi$ coordinate. This means in particular that
\begin{equation*}
 E^3(\rho,z,-\phi) = E^3(\rho,z,\phi).
\end{equation*}
On the other hand, the discrete symmetry $\phi\ra -\phi$ on the initial data implies
\begin{equation*}
 E^3(\rho,z,-\phi) = -E^3(\rho,z,\phi),
\end{equation*}
and therefore
\begin{equation}
 E^3(\rho,z,\phi) = 0.
\end{equation}
Taking into account that
\begin{equation}
 \eta = d\phi,
\end{equation}
we can write the previous condition in a coordinate invariant way as
\begin{equation}\label{condE}
 E^i\eta_i = 0.
\end{equation}
The condition
\begin{equation}\label{condB}
 B^i\eta_i=0
\end{equation}
is proven in an analogous way. We can reconstruct the fields from the potentials, and using \eqref{condE} and \eqref{condB} we have
\begin{eqnarray}
\label{Eipot} E^i & = & \frac{1}{|\eta|^2}\epsilon^{ijk} \eta_j \partial_k\psi,\\
\label{Bipot} B^i & = & -\frac{1}{|\eta|^2}\epsilon^{ijk} \eta_j \partial_k\chi.
\end{eqnarray}
Rescaling and taking the norm we finally arrive at expressions \eqref{fieldspot}

From the electric and magnetic field we can reconstruct the electromagnetic tensor
\begin{equation}\label{FfromEB}
 F_{ij} = 2 E_{[i}n_{j]} - B^k n^l \epsilon_{klij},
\end{equation}
where $n$ is the normal to the surface $M$. In terms of the potentials
\begin{equation}
 F_{ij} = \frac{1}{|\eta|^2}\left(2\eta_{[i}\partial_{j]}\chi - \eta_{k}\partial_{l}\psi\epsilon^{kl}\,_{ij}\right).
\end{equation}

\section{Maxwell equations in terms of the potentials}\label{ApEinstein}
Here we show that the Maxwell equations
\be
\nabla\cdot\me=0,\qquad \nabla\cdot\mb=0
\ee
are automatically satisfied by any choice of potentials $\psi, \chi$. In terms of the 
potentials we have (see \cite{Chrusciel:2009ki}-\cite{Costa:2009hn})
\be
E_1=\frac{1}{\sqrt{g_{33}}}\partial_2\psi,\qquad E_2=-\frac{1}{\sqrt{g_{33}}}\partial_1\psi,\qquad E_3=\frac{1}{\sqrt{g_{33}}}\partial_n\chi
\ee

In components we write

\begin{eqnarray}
\nabla\cdot\me&=&g^{ab}\nabla_aE_b=g^{ab}(\partial_aE_b-\Gamma_{ab}^cE_c)\\
&=&g^{11}(\partial_1E_1-\Gamma_{11}^1E_1-\Gamma_{11}^2E_2-\Gamma_{11}^3E_3)+\\
&+&g^{22}(\partial_2E_2-\Gamma_{22}^1E_1-\Gamma_{22}^2E_2-\Gamma_{22}^3E_3)+\\
&+&g^{33}(\partial_3E_3-\Gamma_{33}^1E_1-\Gamma_{33}^2E_2-\Gamma_{33}^3E_3)+\\
\end{eqnarray}

Due to axial symmetry we have $\partial_3E_3=0$, $\Gamma_{33}^3=0$  and $g_{11}=g_{22}$, which leaves us with

\begin{eqnarray}
\nabla\cdot\me&=&g^{11}\partial_2\psi\left[\partial_1(g^{-1/2}_{33})- g^{-1/2}_{33}(\Gamma_{11}^1+\Gamma_{22}^1+g_{11}g^{33}\Gamma_{33}^1)\right]-\\
&-&g^{11}\partial_1\psi\left[\partial_2(g^{-1/2}_{33})-g^{-1/2}_{33}(\Gamma_{11}^2+\Gamma_{22}^2+g_{11}g^{33}\Gamma_{33}^2)\right]-\\
&-&g^{11}g^{-1/2}_{33}\partial_n\chi(\Gamma_{11}^3+\Gamma_{22}^3)
\end{eqnarray}
Now use that
\be
-\Gamma_{22}^1=\Gamma_{11}^1=\frac{1}{2}g^{11}g_{11,1}
\ee
\be
-\Gamma_{11}^2=\Gamma_{22}^2=\frac{1}{2}g^{11}g_{11,2}
\ee
\be
\Gamma_{ii}^3=0
\ee

\be
\Gamma_{33}^i=-\frac{1}{2}g^{11}g_{33,i}
\ee
to obtain
\begin{eqnarray}
\nabla\cdot\me&=&g^{11}\partial_2\psi\left[-\frac{1}{2}g^{-3/2}_{33}\partial_1g_{33}+\frac{1}{2}g^{-3/2}_{33}\partial_1g_{33}\right]-\\
&-&g^{11}\partial_1\psi\left[-\frac{1}{2}g^{-3/2}_{33}\partial_2g_{33}+\frac{1}{2}g^{-3/2}_{33}\partial_2g_{33}\right]=\\
&=&0.
\end{eqnarray}
And similarly for $\nabla\cdot\mb=0$. This means that Maxwell constraints are automatically satisfied when the fields are written in terms of the potentials $\psi,\chi$, 
leaving no equations for the potentials.

\end{document}